\newtheorem{theorem}{Theorem}
\newtheorem{lemma}[theorem]{Lemma}
\newtheorem{definition}[theorem]{Definition}
\newtheorem{corollary}[theorem]{Corollary}
\newcommand{\R}{\mathbb{R}}
\newcommand{\ceil}[1]{\lceil#1\rceil}
\newcommand{\nnz}{\textup{nnz}}
\newcommand{\tr}{\textup{tr}}
\newcommand{\1}{\mathbbm{1}}
\newcommand{\todo}{\text{\color{red} (- TODO -)}}
\DeclareSymbolFont{stixletters}{LS1}{stix}{m}{it}
\DeclareMathAccent{\cev}{\mathord}{stixletters}{"91}
\DeclareMathAccent{\vec}{\mathord}{stixletters}{"92}
\DeclareMathAccent{\vecev}{\mathord}{stixletters}{"95}
\renewcommand{\overleftrightarrow}{\vecev}
\begin{document}

\title{A Simple and Fast Algorithm for Fair Cuts}
\author{
    Jason Li\thanks{Carnegie Mellon University, \texttt{jmli@cs.cmu.edu}} \and
    Owen Li\thanks{Carnegie Mellon University, \texttt{tianwei2@andrew.cmu.edu}}
}
\date{\today}
\maketitle

\abstract{
We present a simple and faster algorithm for computing \emph{fair cuts} on undirected graphs, a concept introduced in recent work of Li~et~al.~(SODA 2023). Informally, for any parameter $\epsilon>0$, a $(1+\epsilon)$-fair $(s,t)$-cut is an $(s,t)$-cut such that there exists an $(s,t)$-flow that uses $1/(1+\epsilon)$ fraction of the capacity of \emph{every} edge in the cut. Our algorithm computes a $(1+\epsilon)$-fair cut in $\tilde O(m/\epsilon)$ time, improving on the $\tilde O(m/\epsilon^3)$ time algorithm of Li~et~al.\ and matching the $\tilde O(m/\epsilon)$ time algorithm of Sherman~(STOC 2017) for standard $(1+\epsilon)$-approximate min-cut.

Our main idea is to run Sherman's approximate max-flow/min-cut algorithm iteratively on a (directed) \emph{residual} graph. While Sherman's algorithm is originally stated for undirected graphs, we show that it provides guarantees for directed graphs that are good enough for our purposes.
}

\section{Introduction}

The $(s,t)$-min-cut and $(s,t)$-max-flow problems are among the most basic and well-studied problems in combinatorial optimization. A long line of research on fast algorithms~\cite{ford1956maximal,Dinitz70,GoldbergR98,LiuS20} culminated in the recent breakthrough $m^{1+o(1)}$ time algorithm of Chen~et~al.~\cite{ChenKLPGS22}. A separate line of research has focused on applying max-flow to solve other cut-based problems in combinatorial optimization, most notably Steiner min-cut~\cite{LiP20}, Gomory-Hu tree~\cite{abboud2023all}, and expander decomposition~\cite{SaranurakW19}. Using the algorithm of Chen~et~al.\ as a black box, all of these problems are now solvable in $m^{1+o(1)}$ time, which is optimal up to the factor $m^{o(1)}$.

On the other hand, the algorithm of Chen~et~al.\ (and subsequent improvements~\cite{van2023deterministic}) have a few downsides. First, the algorithms do not quite achieve ``near''-linear time, which colloquially means $\tilde O(m)$ time where $\tilde O(\cdot)$ suppresses polylogarithmic factors. In fact, a near-linear time algorithm appears out of reach with the current techniques, which exploit recursion at the cost of $m^{o(1)}$ factors. Also, the algorithms are inherently sequential, leaving open the question of parallel max-flow in $m^{1+o(1)}$ work and sublinear time. These downsides carry over to any algorithm that requires max-flow as a black box, and hence to the cut-based problems mentioned above.

To address these issues, Li~et~al.~\cite{li2023near} introduced the concept of \emph{fair cuts}, a robust form of \emph{approximate} min-cut. They present an algorithm for $(1+\epsilon)$-fair cut in $\tilde{O}(m/\epsilon^3)$ time that can be parallelized,\footnote{The parallelization requires $m^{1+o(1)}$ work and $m^{o(1)}$ time, but the $m^{o(1)}$ factors can be improved to polylogarithmic by recent work~\cite{agarwal2024parallel}. For simplicity, we do not discuss parallelization in this paper.} and then show how to solve $(1+\epsilon)$-approximate Steiner min-cut and Gomory-Hu tree using fair cut as a black box, leading to $\tilde{O}(m/\epsilon^{O(1)})$ time algorithms for both problems that can be parallelized. They also establish the first $\tilde{O}(m)$ time algorithm for expander decomposition that can also be parallelized.

The fair cut problem should be viewed as a generalization of $(1+\epsilon)$-approximate min-cut, which can be solved in $\tilde O(m/\epsilon)$ time by a recent breakthrough of Sherman~\cite{Sherman2017area} but is not robust enough for the above applications.\footnote{In more technical terms, the concept of \emph{uncrossing two cuts} breaks down for arbitrary approximate min-cuts. Fair cuts are designed to satisfy an approximate version of uncrossing, which suffices for the applications.} Nevertheless, there was a gap between the $\tilde{O}(m/\epsilon)$ time algorithm for $(1+\epsilon)$-approximate min-cut (and max-flow) and the $\tilde{O}(m/\epsilon^3)$ time algorithm for $(1+\epsilon)$-fair cut.

In this paper, we close the gap between the two problems by solving $(1+\epsilon)$-fair cut in $\tilde{O}(m/\epsilon)$ time. Conceptually, we present evidence that fair cut is no harder than approximate min-cut despite being more robust and powerful.

\begin{theorem}\label{thm:main}
There is an $\tilde{O}(m/\epsilon)$ time randomized algorithm that, with high probability,\footnote{We adopt the convention that \emph{with high probability} means with probability $1-1/n^{O(1)}$ for arbitrarily large polynomial in $n$.} solves $(1+\epsilon)$-fair cut on an undirected graph with integral and polynomial capacities.
\end{theorem}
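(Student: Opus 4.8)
The plan is to follow the high-level strategy hinted at in the abstract: run Sherman's $(1+\epsilon')$-approximate max-flow/min-cut algorithm iteratively on a directed residual graph, starting from the zero flow, and argue that after $\tilde O(1)$ rounds the accumulated flow saturates every edge of some cut up to a $(1+\epsilon)$ factor. First I would recall the definition of a $(1+\epsilon)$-fair cut: a cut $(S,\bar S)$ together with a feasible $(s,t)$-flow $f$ such that every edge $(u,v)$ crossing the cut from $S$ to $\bar S$ has $f(u,v)\ge \frac{1}{1+\epsilon}\,c(u,v)$. The key reformulation is that this is exactly the statement that in the residual graph $G_f$, every edge leaving $S$ has residual capacity at most $\frac{\epsilon}{1+\epsilon}\,c(u,v)\le \epsilon\cdot c(u,v)$ — i.e., the residual graph has an $(s,t)$-cut in which \emph{every} edge is ``almost saturated.'' So the goal becomes: produce a flow $f$ whose residual graph $G_f$ admits an $(s,t)$-cut all of whose forward edges have small residual capacity.

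The core of the argument is the iterative scheme. Set $\epsilon' = \Theta(\epsilon)$ (or a constant — I would check which suffices) and let $f_0=0$. In round $i$, run Sherman's algorithm on the current directed residual graph $G_{f_{i-1}}$ to get an approximate max-flow $g_i$ and a dual certificate, and set $f_i = f_{i-1}+g_i$ (scaled appropriately so it stays feasible). The crucial point I must establish is the claim in the abstract that Sherman's algorithm, although stated for undirected graphs, still yields a useful guarantee on directed residual graphs: namely that $g_i$ routes at least a $(1-\epsilon')$ fraction of the residual min-cut value, \emph{and} — this is the part I expect to be the main obstacle — that it simultaneously produces a cut $(S_i,\bar S_i)$ whose total forward residual capacity is at most $(1+\epsilon')$ times the flow value, so that most edges across $(S_i,\bar S_i)$ are nearly saturated by $g_i$. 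I would prove this by opening up Sherman's congestion-approximator / multiplicative-weights framework and checking that its primal-dual guarantee degrades gracefully when the graph is the residual of an undirected graph (which is ``nearly undirected'': each residual edge has a reverse edge, only the capacities differ). A potential-function argument then shows the residual min-cut value drops by a constant factor each round, so after $O(\log(mU/\epsilon)) = \tilde O(1)$ rounds the residual min-cut is at most $\epsilon$ times the true min-cut; at that point the final cut certificate, together with the accumulated flow $f = \sum_i g_i$, is a $(1+O(\epsilon))$-fair cut. Rescaling $\epsilon$ by a constant gives the theorem.

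For the running time: each round calls Sherman's algorithm once, which runs in $\tilde O(m/\epsilon') = \tilde O(m/\epsilon)$ time, and there are $\tilde O(1)$ rounds, for a total of $\tilde O(m/\epsilon)$. The only subtlety is that Sherman's algorithm is designed for undirected graphs with a congestion approximator, so I would either (a) build a single congestion approximator for the underlying undirected graph once at the start and reuse it across rounds (since the residual graphs all live on the same vertex set with capacities bounded by the original ones), or (b) argue that a congestion approximator for the undirected skeleton is also an $O(1)$-congestion approximator for every residual graph, which keeps the per-round cost at $\tilde O(m/\epsilon)$. Integral polynomial capacities ensure $\log U = O(\log n)$ so all the $\tilde O(\cdot)$ bounds are honest, and bounded bit-complexity lets us keep the flows rational with $\mathrm{poly}(n)$-bounded denominators throughout.

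I expect the hard part to be the directed/residual analysis of Sherman's algorithm — specifically, verifying that its \emph{cut} output (not just the flow value) behaves well, since fairness is a per-edge condition on the cut and thus needs the stronger ``every crossing edge is nearly saturated'' guarantee rather than the usual ``total cut capacity is approximately optimal.'' Getting this to hold on a directed residual graph, where Sherman's undirected machinery does not literally apply, is the crux; everything else (the potential-function decay, the round count, the running time) should be routine once that lemma is in place.
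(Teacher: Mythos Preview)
Your high-level strategy matches the paper's, and you correctly identify the crux: fairness is a \emph{per-edge} condition, so it is not enough that the residual min-cut has small \emph{total} value. But your proposal never actually closes this gap. Your termination criterion is ``the residual min-cut is at most $\epsilon$ times the true min-cut,'' and you then assert that the final cut certificate is $(1+O(\epsilon))$-fair. This does not follow: a cut can have tiny total residual capacity while containing an edge with large residual capacity relative to its own weight (e.g., a cut with one huge saturated edge and one small completely unsaturated edge). Sherman's dual certificate only bounds the \emph{sum} of residual capacities across the cut, never the per-edge ratio, so no amount of ``opening up Sherman's framework'' will produce the per-edge guarantee you need from a single call.

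The paper supplies exactly the mechanism you are missing. Rather than tracking the global residual min-cut, it maintains an explicit cut $S_i$ alongside the flow $f_i$, and its potential is $c_{G'_i}(\vec\partial S_i)$, the residual capacity of the \emph{unsaturated} arcs of $S_i$ only. Before each Sherman call, edges of $\partial S_i$ that are already $(1-4\epsilon)$-saturated are deleted from the graph, forcing any new flow to attack the remaining unsaturated edges. The Sherman primitive on the residual graph then returns \emph{either} a flow \emph{or} a cut (not both, as you assume): if a flow, it drops the potential by a constant factor directly; if a cut $X_i$, the algorithm replaces $S_i$ by whichever of $S_i\cup X_i$ or $S_i\cap X_i$ has smaller potential, and submodularity of the directed cut function gives the constant-factor drop. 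After $O(\log n)$ rounds the potential is below $4\epsilon$, which forces \emph{every} remaining arc of $\vec\partial S_{L+1}$ to be saturated. Finally, the accumulated flow is not itself an $(s,t)$-flow (each round leaves a residual demand routable in $G$ with congestion $\epsilon$), so one must add these $O(\log n)$ correction flows, which is why the paper first proves a $(1+O(\epsilon\log n))$-fair cut and then rescales. The saturated-edge deletion and the submodular cut-combination step are the ideas your proposal lacks.
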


Our algorithm is iterative, sending flow on each iteration and updating the \emph{residual} graph, which is directed. Our main idea is observing that Sherman's approximate max-flow/min-cut algorithm (for undirected graphs) actually performs well on certain \emph{directed} graphs, such as residual graphs of originally undirected graphs. %After each approximate max-flow/min-cut computation, we update the residual graph according to the new flow, and temporarily delete some edges of the approximate min-cut that are approximately saturated before computing the next flow. We also carefully maintain an $(s,t)$-cut alongside each flow, and show that the total capacity of cut edges that are not approximately saturated drops by a constant factor on each iteration, resulting in $O(\log n)$ iterations total.

\subsection{Preliminaries}

We work with both undirected and directed graphs in this paper. For an undirected graph $G=(V,E)$, let $\overleftrightarrow G=(V,\overleftrightarrow E)$ be the directed version of $G$ with each edge replaced by bidirectional arcs of the same capacity. Given a vertex set $S\subseteq V$, let $\partial S$ in an undirected graph be the set of edges with exactly one endpoint in $S$, and let $\vec\partial S$ in a directed graph be the set of arcs whose tail is in $S$ and whose head is not in $S$. We may also use $\vec\partial S$ for an undirected graph $G$, in which case we are referring to the bidirected $\overleftrightarrow G$.

Throughout the paper, we use $c_G(\cdot)$ to denote edge and arc capacities. For an arc/edge set $F$, let $c_G(F)$ denote the total capacity of arcs/edges in $F$. We assume that all capacities are integers and polynomially bounded; in general, we would incur extra $\log W$ terms where $W$ is the maximum integral capacity, but we stick with polynomially bounded for simplicity.

We represent a \emph{flow} as a nonnegative vector $f\in\R^{\overleftrightarrow E}$ for an undirected graph and $f\in\R^E$ for a directed graph. The \emph{congestion} of a flow is $\max_{(u,v)\in\overleftrightarrow E}f(u,v)/c_G(u,v)$, where $\overleftrightarrow E$ is replaced by $E$ for a directed graph. Sometimes we abuse notation and say the flow \emph{has congestion} $\kappa$ if the congestion of the flow is \emph{at most} $\kappa$. A flow is \emph{feasible} if its congestion is at most $1$. A \emph{demand} is a vector $d\in\R^V$ with $\sum_vd_v=0$. A flow \emph{satisfies} or \emph{routes} demand $d\in\R^V$ if for each vertex $v\in V$, $\sum_{(v,w)}f(v,w)-\sum_{(u,v)}f(u,v)=d_v$, i.e., the net flow out of $v$ is exactly $d_v$. A flow is an $(s,t)$-flow of \emph{value} $\tau$ if it satisfies demand $\tau(\1_s-\1_t)$. Here, $\1_v$ is the vector with entry $1$ at $v$ and entry $0$ elsewhere. We also use $\1$ as the all-ones vector.

Given an undirected/directed graph $G=(V,E)$ and a flow $f$, the \emph{residual} graph $G'$ of $G$ for flow $f$ is the directed graph with arc capacities $c_{G'}(u,v)=c_G(u,v)-f(u,v)+f(v,u)$ for each $(u,v)$ where either $(u,v)\in E$ or $(v,u)\in E$. Here, $c_G(u,v), f(u,v)$ are zero if $(u,v)\notin E$ and likewise for $(v,u)$.

A \emph{cut} is a bipartition $(S,V\setminus S)$ of the vertex set where $S,V\setminus S\ne\emptyset$. It is an $(s,t)$-cut if $s\in S$ and $t\notin S$. For a directed graph $G$, the \emph{value} of the cut $(S,V\setminus S)$ is $c_G(\vec\partial S)$. We require the following fact about the \emph{submodularity} of the directed cut function: for any directed graph $G$ and two sets $A,B\subseteq V$, $c_G(\vec\partial A)+c_G(\vec\partial B)\ge c_G(\vec\partial(A\cup B))+c_G(\vec\partial(A\cap B))$.

We now define the object of study in this paper, a \emph{fair cut}.

\begin{definition}
Let $s,t$ be two vertices in $V$. For any parameter $\alpha\ge 1$,
we say that a cut $(S,V\setminus S)$ is an \emph{$\alpha$-fair $(s,t)$-cut}
if there exists a feasible $(s,t)$-flow $f$ such that $f(u,v)\ge\frac{1}{\alpha}\cdot c(u,v)$
for every arc $(u,v)\in \vec\partial S$.
\end{definition}

We defer the matrix notation from Sherman's approximate max-flow/min-cut algorithm to its relevant \Cref{sec:approximate-max-flow}.

\section{Fair Cut Algorithm}
In this section, we present our fair cut algorithm, establishing \Cref{thm:main}. It will be more convenient to prove the following version, where $\epsilon$ is replaced by $O(\epsilon\log n)$.
\begin{theorem}\label{thm:main-convenient}
There is an $\tilde{O}(m/\epsilon)$ time randomized algorithm that, with high probability, solves\linebreak $(1+O(\epsilon\log n))$-fair cut on an undirected graph with integral and polynomial capacities.
\end{theorem}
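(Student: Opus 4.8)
The approach is to compute a fair cut by iteratively augmenting flow along Sherman-style approximate max-flow on the current residual graph, maintaining the invariant that the residual graph is "one-sided directed'' in the sense that it arises as the residual of an originally-undirected graph with respect to some flow. Concretely, start with the undirected graph $G$, a target cut value guess $\tau$ (binary-searched or obtained from an initial approximate min-cut), and a zero flow. At each iteration, run Sherman's approximate max-flow/min-cut routine on the current residual graph $G'$; either it certifies that the residual $(s,t)$-max-flow is small (so the accumulated flow $f$ is nearly maximum and the associated cut is nearly tight), or it returns a residual flow of value at least some constant fraction of what remains, which we add to $f$. The key point to argue is that, after $O(\log n / \epsilon)$ iterations, the residual graph has $(s,t)$-max-flow at most an $O(\epsilon)$ fraction of the total, so that every arc crossing the returned cut $\vec\partial S$ is nearly saturated by $f$, i.e.\ $f(u,v) \ge \frac{1}{1+O(\epsilon\log n)} c(u,v)$, which is exactly the fair-cut condition.

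First I would set up the single-iteration guarantee: Sherman's algorithm on an undirected graph returns, for congestion parameter $1+\epsilon$, either a flow routing a $(1-O(\epsilon))$-fraction of the max-flow value with congestion $1+\epsilon$, or a cut whose value is within $(1+\epsilon)$ of the max-flow. I would then verify (this is the "directed graphs are good enough'' claim flagged in the abstract) that running Sherman on the residual graph $\overleftrightarrow{G}$-minus-flow still yields a useful bound: even though residual graphs are directed, the residual of an undirected graph has the property that for every arc $(u,v)$ with $c_{G'}(u,v) < c_G(u,v)$ there is a reverse arc $(v,u)$ with capacity $c_{G'}(v,u) > c_G(v,u)$ of the same total, so the undirected "symmetrization'' of $G'$ is just $G$ with flow $f$ routed and undone — in particular the symmetrized graph has small congestion cost, and Sherman's oblivious-routing / flow-decomposition machinery (which only needs a good undirected congestion approximator) applies. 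The cut it produces is a genuine directed cut $\vec\partial S$ in $G'$, and $c_{G'}(\vec\partial S)$ small means exactly that $f$ saturates $\partial S$ in $G$ up to the approximation factor.

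Next I would assemble the iteration count. Each call to Sherman either reduces the residual max-flow by a constant factor (after rescaling, since we route a constant fraction of what remains with congestion $1+\epsilon$, and then scale the flow down by $1+\epsilon$ to keep it feasible, losing only a $(1+\epsilon)$ factor per step) or terminates. Starting from residual max-flow at most $F = c(\partial S^*) \le \mathrm{poly}(n)$ and wanting to drive it below $\epsilon F$, a constant-factor decrease per iteration would need only $O(\log(1/\epsilon))$ iterations — but the subtlety is that the $(1+\epsilon)$ congestion loss compounds. After $k$ iterations the accumulated flow has congestion $(1+\epsilon)^k$, so to keep congestion $1+O(\epsilon\log n)$ we can afford $k = O(\log n)$ iterations, and within $O(\log n)$ iterations a constant-factor-per-step decrease drives the residual flow down by a $\mathrm{poly}(n)$ factor, hence below $\epsilon F$ since $\epsilon \ge 1/\mathrm{poly}(n)$. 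Actually the cleanest accounting is: each iteration multiplies the residual gap by a constant $<1$ and multiplies congestion by $(1+\epsilon)$; run until the gap is an $\epsilon$-fraction; total congestion $(1+\epsilon)^{O(\log n)} = 1 + O(\epsilon\log n)$, total time $O(\log n)$ Sherman calls $\times \tilde O(m/\epsilon) = \tilde O(m/\epsilon)$, matching the theorem.

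**The main obstacle.**
The crux — and the step I expect to occupy the bulk of the real proof — is establishing that Sherman's algorithm, designed and analyzed for undirected graphs, delivers the needed flow-or-cut guarantee when run on a directed residual graph. Sherman's analysis relies on a congestion approximator built from the undirected structure (e.g.\ a hierarchy of sparse cuts or a tree-based oblivious routing), and on the symmetry of congestion. I would handle this by exploiting that our residual graphs are not arbitrary: $G' = \overleftrightarrow{G}$ with a feasible-up-to-$(1+O(\epsilon\log n))$ flow routed, so $\min(c_{G'}(u,v), c_{G'}(v,u)) \ge $ a constant fraction of $\min(c_G(u,v),c_G(v,u))$ and $c_{G'}(u,v)+c_{G'}(v,u) \le 2c_G(u,v)$; thus the congestion approximator for $G$ is, up to constant factors, also valid for $G'$, and the "for directed graphs that are good enough for our purposes'' qualification in the abstract refers precisely to accepting constant-factor (rather than $(1+\epsilon)$) slack in this black-box step while still getting $(1+\epsilon)$ overall because the slack only affects how many iterations we run, not the final congestion bound. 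Making this interface precise — stating the exact directed guarantee needed from Sherman and checking it follows from his construction — is the heart of the argument; everything else (residual-graph bookkeeping, the fair-cut certification from a small residual cut, the geometric iteration count) is routine.
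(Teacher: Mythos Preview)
Your high-level picture---iterate Sherman on residual graphs, use the congestion approximator of $G$ for $G'$ via $G'\le 2G$, accumulate $O(\epsilon\log n)$ slack over $O(\log n)$ rounds---matches the paper, but the proposal is missing the paper's central algorithmic device and has a gap in the fairness conclusion.

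The paper maintains \emph{both} a flow $f_i$ and a cut $(S_i,V\setminus S_i)$ throughout, with progress measure $c_{G'_i}(\vec\partial S_i)$, and it sets $\tau=\tfrac12\,c_{G'_i}(\vec\partial S_i)$ each round (no binary search). If Sherman returns a flow, the residual capacity across $\vec\partial S_i$ drops by a constant factor. If Sherman returns a cut $X_i$, the flow is kept and $S_{i+1}$ is set to whichever of $S_i\cup X_i$ or $S_i\cap X_i$ has smaller residual capacity; \emph{submodularity of the directed cut function} forces the progress measure to drop by a constant factor in this case too. Your flow-only scheme has no analogue of this cut-side update: when Sherman returns a cut you make no progress, and the various cuts returned across iterations need not cohere into one cut whose arcs are all nearly saturated. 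Relatedly, your claim ``residual max-flow $<\epsilon F$ implies every arc of the returned cut is nearly saturated'' does not follow---small \emph{total} residual capacity says nothing about individual arcs. The paper instead drives $c_{G'_i}(\vec\partial S_i)$ below the absolute constant $4\epsilon$ and uses integrality of capacities to conclude that every arc of $\vec\partial S_{L+1}$ individually satisfies $f(u,v)\ge(1-4\epsilon)c_G(u,v)$; a further ingredient you omit---removing already-saturated edges of $\partial S_i$ before each Sherman call---is what makes the flow-progress bound go through (it ensures $c_{G'_i}(\vec\partial S_i)\ge 4\epsilon\,c_{G_i}(\partial S_i)$, controlling the junk demand). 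Finally, your justification for running Sherman on $G'$ asserts that $\min(c_{G'}(u,v),c_{G'}(v,u))$ is a constant fraction of the original, which is false (one direction can be zero); the correct and sufficient observation is simply $G'\le 2G$, which yields $\|RD_{G'}C_{G'}\|_{\infty\to\infty}\le 4$ for any congestion approximator $R$ of $G$.
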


We will use the following approximate max-flow/min-cut primitive for \emph{residual} graphs, which we present in \Cref{sec:approximate-max-flow}.
\begin{restatable}{theorem}{STCutFlow}\label{thm:s-t-cut-flow}
Given an undirected graph $G$, a residual graph $G'$ of $G$, two vertices $s,t$, and a parameter $\tau>0$, there is a randomized algorithm that runs in time $\Tilde{O}(m/\epsilon)$ and computes, with high probability,
 \begin{enumerate}
 \item Either an $(s,t)$-cut of value less than $\tau$, or
 \item A feasible flow $f$ in $G'$ routing a demand $d$ such that the residual demand $\tau(\mathbbm 1_s-\mathbbm 1_t)-d$ can be routed in $G$ with congestion~$\epsilon$.
 \end{enumerate}
\end{restatable}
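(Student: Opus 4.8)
The plan is to reduce the problem on the directed residual graph $G'$ to a single call to Sherman's approximate max-flow/min-cut algorithm~\cite{Sherman2017area} on the \emph{undirected} graph $G$. Since $G'$ is the residual of $G$ with respect to some feasible flow $g$, which --- as in our algorithm --- we take to be an $(s,t)$-flow of value $|g|$, the two arcs of $G'$ on each edge $uv$ have capacities $c_G(u,v)\pm\phi(u,v)$, where $\phi(u,v):=g(u,v)-g(v,u)$ satisfies $|\phi(u,v)|\le c_G(u,v)$. Two facts drive the proof. First, by flow conservation, for every $(s,t)$-cut $S$,
\[
c_{G'}(\vec\partial S)\;=\;c_G(\partial S)-|g|,
\]
so the $(s,t)$-cut structure of $G'$ is exactly that of $G$ shifted down by the constant $|g|$. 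Second, any feasible flow in $G$, after subtracting $g$ edge by edge, induces a feasible flow in $G'$ --- the inequality $|\phi(u,v)|\le c_G(u,v)$ is precisely what makes the subtracted net flow realizable on the two residual arcs. Together these say that routing $\tau(\1_s-\1_t)$ in the directed graph $G'$ is equivalent to routing $(\tau+|g|)(\1_s-\1_t)$ in the undirected graph $G$, which is the sense in which Sherman's undirected algorithm already ``works on'' the directed residual.

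\textbf{Main steps.} Build a congestion approximator $R$ for the undirected graph $G$ of quality $q=\tilde O(1)$ in $\tilde O(m)$ time via a known near-linear-time construction, and run Sherman's algorithm on $G$ with target demand $(\tau+|g|)(\1_s-\1_t)$ and accuracy parameter $\Theta(\epsilon/q)$; since $q=\tilde O(1)$ this runs in $\tilde O(m/\epsilon)$ time. Sherman's algorithm returns one of two outcomes. If it returns an $(s,t)$-cut $S$ with $c_G(\partial S)<\tau+|g|$, then by the displayed identity $c_{G'}(\vec\partial S)<\tau$, which is case~1. Otherwise it returns a flow $f_0$ in $G$ of congestion $1+O(\epsilon)$ routing a demand $b_0$ with $\|R\big((\tau+|g|)(\1_s-\1_t)-b_0\big)\|_\infty=O(\epsilon/q)$; we pass to the feasible flow $f_0/(1+O(\epsilon))$, let $f$ be the feasible flow in $G'$ induced by $f_0/(1+O(\epsilon))-g$, and note that $f$ routes $d:=b_0/(1+O(\epsilon))-|g|(\1_s-\1_t)$. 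The residual demand then splits as
\[
\tau(\1_s-\1_t)-d\;=\;\big((\tau+|g|)(\1_s-\1_t)-b_0\big)\;+\;\tfrac{O(\epsilon)}{1+O(\epsilon)}\,b_0,
\]
where the first term has $\|R\cdot\|_\infty=O(\epsilon/q)$ and hence routes in $G$ with congestion $O(\epsilon)$, and the second term is an $O(\epsilon)$-fraction of a demand essentially equal to $(\tau+|g|)(\1_s-\1_t)$, which routes in $G$ with congestion $O(\epsilon)$ because the non-cut outcome certifies $\mathrm{mincut}_G(s,t)\ge\tau+|g|$. Rescaling $\epsilon$ by a constant gives case~2.

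\textbf{Main obstacle.} The structural identity above is short; the real work is the interface with Sherman's algorithm. The delicate part is the bookkeeping in the non-cut branch: one must convert the multiplicative $(1+O(\epsilon))$ congestion slack of Sherman's output into an \emph{additive} residual demand that routes in $G$ with congestion at most $\epsilon$, which is why the accuracy parameter fed to Sherman is scaled down by the congestion-approximator quality $q$, and hence why we work with the $(1+O(\epsilon\log n))$ version \Cref{thm:main-convenient}. A secondary point is landing case~1 with a cut of value \emph{strictly} below $\tau$ (rather than below $(1+\epsilon)\tau$): this requires running Sherman's method so that a failure to route the target demand is certified by a genuine cut below the target, and then recovering such a cut from the potential vector by thresholding over its levels and taking the best level set, using submodularity of the directed cut function.
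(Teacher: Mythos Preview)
Your reduction hinges on the premise that the flow $g$ defining the residual graph is an \emph{exact} $(s,t)$-flow, so that for every $(s,t)$-cut $S$ one has the constant shift $c_{G'}(\vec\partial S)=c_G(\partial S)-|g|$. But this premise is precisely what fails in the setting where the theorem is invoked. In the fair-cut algorithm, the flow $f_i$ is the accumulation of the flows $h_1,\ldots,h_{i-1}$ returned in case~2, each of which routes a demand $d_j$ that is only \emph{approximately} $\tau_j(\1_s-\1_t)$; the leftover $\tau_j(\1_s-\1_t)-d_j$ is routable in $G$ with congestion $\epsilon$, but it is not zero. Worse, the residual graph $G'_i$ is taken with respect to the \emph{restriction} $f_i|_{G_i}$ to the subgraph $G_i=G\setminus(\partial S_i\setminus U_i)$, which destroys flow conservation at every vertex incident to a removed ``saturated'' edge. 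Hence the net flow of $g$ across an $(s,t)$-cut $S$ is $\sum_{v\in S}b_v$ for some non-$(s,t)$ demand $b$ and genuinely depends on $S$; your displayed identity becomes $c_{G'}(\vec\partial S)=c_G(\partial S)-\sum_{v\in S}b_v$, and the single undirected call with target $(\tau+|g|)(\1_s-\1_t)$ no longer corresponds to the directed problem. In particular, a cut $S$ with $c_G(\partial S)<\tau+|g|$ need not satisfy $c_{G'}(\vec\partial S)<\tau$, so case~1 breaks; and in the flow branch, the subtraction $f_0/(1+O(\epsilon))-g$ routes $b_0/(1+O(\epsilon))-b$ rather than $b_0/(1+O(\epsilon))-|g|(\1_s-\1_t)$, so the residual-demand decomposition you wrote does not hold.

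The paper avoids this by not reducing to the undirected problem at all. It runs Sherman's stochastic-matrix subroutine \emph{directly} on $G'$, setting $A=D_{G'}C_{G'}$; the point is that a congestion approximator $R$ for the undirected $G$ still satisfies $\|R D_{G'}C_{G'}\|_{\infty\to\infty}\le 4$ whenever $G'\le 2\overleftrightarrow G$ (\Cref{lem:operator-norm}), which holds for any residual graph regardless of what demand $g$ routes. The dual outcome is then converted to a strict cut in $G'$ by the level-set argument of \Cref{lem:threshold-cut}. If you want to salvage your approach, you would need to absorb the non-$(s,t)$ part of $b$ into the target demand and show that Sherman's undirected algorithm handles the resulting general demand with a cut certificate that translates back to $G'$; but at that point you are essentially re-deriving the paper's operator-norm bound in disguise.
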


Equipped with this flow/cut primitive, the fair cut algorithm is quite simple. We iteratively maintain a cut $(S_i,V\setminus S_i)$ and a flow $f_i$ that both gradually improve over time. On each iteration, we temporarily remove the edges in $\partial S_i$ that are nearly saturated in the right direction, and then call the flow primitive on the residual graph (minus the removed edges) with a careful choice of $\tau$. Depending on whether the flow primitive returns a flow or cut, we either update the current flow or the current cut, leaving the other unchanged.

We present the formal algorithm below.

\begin{enumerate}
\item Let $f_1$ be the empty flow and $(S_1,V\setminus S_1)$ be an arbitrary $(s,t)$-cut.
\item For iteration $i=1,2,\ldots,L=\Theta(\log n)$:
 \begin{enumerate}
 \item Let $\vec U_i\subseteq\overleftrightarrow E$ be all arcs $(u,v)\in\vec\partial S_i$ satisfying $f_i(u,v)\le(1-4\epsilon)c_G(u,v)$, i.e., the ``unsaturated'' arcs in $\vec\partial S_i$.
 \item Let $U_i\subseteq E$ be $\vec U_i$ with all arcs undirected (and parallel edges removed).
 \item Let $G_i\subseteq G$ be the undirected graph $G\setminus(\partial S_i\setminus U_i)$, i.e., remove all edges in $\partial S_i$ that are ``saturated'' in the right direction.
 \item Let $G'_i$ be the residual graph of $G_i$ for the restricted flow $f_i|_{G_i}$, defined as the flow $f_i$ with flow on arcs outside $\overleftrightarrow G_i$ removed.
 \item Call \Cref{thm:s-t-cut-flow} on graph $G_i$, its residual graph $G'_i$, vertices $s,t$, and parameter $\tau=0.5c_{G'_i}(\vec\partial S_i)$.
 \item If \Cref{thm:s-t-cut-flow} returns a flow $h$:
  \begin{itemize}
  \item Set $f_{i+1}=f_i+h$ and $S_{i+1}=S_i$, i.e., add the new flow but keep the current cut.
  \end{itemize}
 \item If \Cref{thm:s-t-cut-flow} returns a cut $(X_i,V\setminus X_i)$:
  \begin{itemize}
  \item Set $f_{i+1}=f_i$ and $S_{i+1}$ as either $S_i\cup X_i$ or $S_i\cap X_i$, i.e., update the cut but keep the current flow. Of the two choices, pick the $S_{i+1}$ minimizing $c_{G'_i}(\vec\partial S_{i+1})$.
  \end{itemize}
 \end{enumerate}
 \item Output the $(s,t)$-cut $(S_{L+1},V\setminus S_{L+1})$.
\end{enumerate}
It is clear that the algorithm makes $O(\log n)$ calls to \Cref{thm:s-t-cut-flow} and runs in $\tilde{O}(m)$ time outside these calls, for an overall running time of $\tilde{O}(m/\epsilon)$. For the rest of this section, we prove its correctness.

Our measure of progress is the quantity $c_{G'_i}(\vec\partial S_i)$, i.e., the total residual capacity of all ``unsaturated'' arcs in $\vec\partial S_i$, which we show drops by a constant factor on each iteration.

\begin{lemma}\label{lem:progress-flow}
If \Cref{thm:s-t-cut-flow} returns a flow $h$, then $c_{G'_{i+1}}(\vec\partial S_{i+1})\le0.75c_{G'_i}(\vec\partial S_i)$.
\end{lemma}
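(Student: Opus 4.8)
When Theorem~\ref{thm:s-t-cut-flow} returns a flow $h$, the algorithm keeps $S_{i+1}=S_i$ but updates $f_{i+1}=f_i+h$. So the cut $\vec\partial S_{i+1}=\vec\partial S_i$ is literally the same set of arcs, and what changes is the residual capacity: we need to show the total residual capacity of arcs in $\vec\partial S_i$, measured in $G'_{i+1}$ instead of $G'_i$, drops by a factor $0.75$. The plan is to split $\vec\partial S_i$ into the unsaturated arcs $\vec U_i$ (those with $f_i(u,v)\le(1-4\epsilon)c_G(u,v)$) and the remaining ``saturated-in-the-right-direction'' arcs. The latter were removed when forming $G_i$, so they carry no flow from $h$ and their residual contribution is unchanged; but they also contribute essentially nothing to $c_{G'_i}(\vec\partial S_i)$ in the first place. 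Wait — I need to be careful: $c_{G'_i}(\vec\partial S_i)$ is the residual capacity in $G'_i$, and $G'_i$ is the residual graph of $G_i$, which has deleted those saturated edges; so arcs of $\vec\partial S_i\setminus\vec U_i$ have residual capacity $0$ in $G'_i$. Hence $c_{G'_i}(\vec\partial S_i)=c_{G'_i}(\vec U_i)$, and it suffices to bound $c_{G'_{i+1}}(\vec\partial S_{i+1})$.

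**Main argument.** The key point is the choice $\tau=0.5\,c_{G'_i}(\vec\partial S_i)=0.5\,c_{G'_i}(\vec U_i)$, and the fact that $\vec\partial S_i$ (restricted to $G_i$) is an $(s,t)$-cut in $G_i$ of value exactly $c_{G'_i}(\vec U_i)=2\tau$ in the residual graph $G'_i$. Since the algorithm reached case~(2) of Theorem~\ref{thm:s-t-cut-flow}, there is no $(s,t)$-cut of value less than $\tau$ in $G'_i$, so the returned flow $h$ routes a demand $d$ with $\tau(\1_s-\1_t)-d$ routable in $G_i$ with congestion $\epsilon$. I would argue that a feasible flow in $G'_i$ routing demand $d$ with $d$ ``close to'' $\tau(\1_s-\1_t)$ must push a total of roughly $\tau$ units across the cut $\vec\partial S_i$ — using a flow-across-a-cut / conservation argument: net flow of $h$ across $\vec\partial S_i$ equals $\Delta(d\text{ restricted to }S_i)\approx\tau$ (up to the $O(\epsilon\tau)$ slack from the residual demand, which is small because $\epsilon$ is tiny). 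Since $h$ is feasible in $G'_i$, each arc $(u,v)\in\vec U_i$ loses $h(u,v)$ of its residual capacity when we pass to $G'_{i+1}$ (adding $h$ to $f_i$ increases $f_i(u,v)$, hence decreases $c_{G'}(u,v)=c_G(u,v)-f_i(u,v)+f_i(v,u)$; the reverse arc's increase is bounded since... one must also account for $h$ possibly using reverse arcs). Concretely, $c_{G'_{i+1}}(\vec U_i)\le c_{G'_i}(\vec U_i)-(\text{net flow of }h\text{ across }\vec\partial S_i)\le 2\tau-(1-O(\epsilon))\tau$, which is at most $0.75\cdot 2\tau=0.75\,c_{G'_i}(\vec\partial S_i)$ for $\epsilon$ small enough. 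One subtlety: arcs in $\vec\partial S_{i+1}=\vec\partial S_i$ that were saturated (in $\vec\partial S_i\setminus\vec U_i$) might re-enter $\vec U_{i+1}$... but actually $\vec U_{i+1}$ is defined relative to $f_{i+1}$ and $c_G$, and those arcs had $f_i(u,v)>(1-4\epsilon)c_G(u,v)$, and $f_{i+1}(u,v)\ge f_i(u,v)$ since $h$ is supported on $G_i$ which excludes them — so they stay saturated, hence stay out of $\vec U_{i+1}$, hence contribute $0$ to $c_{G'_{i+1}}(\vec\partial S_{i+1})$ as well. Good.

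**The main obstacle.** The delicate part is precisely quantifying the net flow of $h$ across the cut. The flow $h$ routes $d$, not $\tau(\1_s-\1_t)$ exactly, so $\Delta(d|_{S_i})=\Delta(\tau(\1_s-\1_t)|_{S_i}) - \Delta((\tau(\1_s-\1_t)-d)|_{S_i})=\tau - \Delta(r|_{S_i})$ where $r:=\tau(\1_s-\1_t)-d$ is the residual demand. I need $|\Delta(r|_{S_i})|$ small: since $r$ is routable in $G_i$ with congestion $\epsilon$, the amount of $r$ crossing $\partial S_i$ is at most $\epsilon\cdot c_{G_i}(\partial S_i)$, and I must relate $c_{G_i}(\partial S_i)$ to $\tau$. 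This is the crux: a priori $c_{G_i}(\partial S_i)$ (original capacities) could be much larger than $\tau$ (residual capacities), since many arcs of $\vec\partial S_i$ going the ``wrong way'' (into $S_i$) have large residual capacity but were already counted, and the undirected $\partial S_i$ includes both directions. I expect the resolution uses that arcs saturated in the right direction were deleted (so their undirected capacity doesn't appear in $G_i$ at all — wait, only the saturated ones are deleted), and that the unsaturated arcs have $c_{G'_i}(u,v)=c_G(u,v)-f_i(u,v)+f_i(v,u)\ge c_G(u,v)-(1-4\epsilon)c_G(u,v)=4\epsilon\, c_G(u,v)$, giving $c_G(u,v)\le c_{G'_i}(u,v)/(4\epsilon)$ — but that goes the wrong way for bounding things by $\tau$. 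More likely the intended bound runs through $c_{G_i}(\partial S_i)\le c_{G'_i}(\vec\partial S_i)+c_{G'_i}(\vec\partial(V\setminus S_i))$ type identities, or uses that the reverse arcs contribute their original capacity plus the flow, controlled by feasibility of $f_i$. I would look for an inequality of the form $c_{G_i}(\partial S_i)=O(c_{G'_i}(\vec\partial S_i))=O(\tau)$, so that the residual-demand correction is only $O(\epsilon\tau)$, and then the $0.75$ constant absorbs it comfortably. Pinning down this relation — i.e., that deleting the saturated right-direction edges makes the total undirected boundary capacity of $G_i$ comparable to the residual right-direction capacity $\tau$ — is where I'd spend the real effort, and I suspect it is exactly the place the authors invoke the deletion step in~(c) and the $(1-4\epsilon)$ threshold in~(a).
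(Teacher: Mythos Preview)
Your approach is essentially the paper's, and you have in fact written down every ingredient needed; the only gap is that you dismiss the very inequality that closes the argument.

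You correctly isolate the correction term as at most $\epsilon\,c_{G_i}(\partial S_i)$, and you correctly derive, from the $(1-4\epsilon)$ threshold, that each arc $(u,v)\in\vec U_i$ satisfies $c_{G'_i}(u,v)\ge 4\epsilon\,c_G(u,v)$. You then say this ``goes the wrong way''. It does not: since the edges of $\partial S_i$ in $G_i$ are exactly $U_i$, summing gives
\[
c_{G'_i}(\vec\partial S_i)\;\ge\;4\epsilon\,c_{G_i}(\partial S_i),
\qquad\text{i.e.}\qquad
\epsilon\,c_{G_i}(\partial S_i)\;\le\;\tfrac14\,c_{G'_i}(\vec\partial S_i).
\]
This is precisely the bound the paper uses. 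The point you missed is that you do not need $c_{G_i}(\partial S_i)=O(\tau)$; you only need $\epsilon\cdot c_{G_i}(\partial S_i)$ to be a fixed constant fraction of $c_{G'_i}(\vec\partial S_i)$, and the $\epsilon$ from the congestion cancels the $1/(4\epsilon)$ from the threshold. With this, the net flow of $h$ across $\vec\partial S_i$ is at least
\[
\tau-\tfrac14\,c_{G'_i}(\vec\partial S_i)\;=\;\tfrac12\,c_{G'_i}(\vec\partial S_i)-\tfrac14\,c_{G'_i}(\vec\partial S_i)\;=\;\tfrac14\,c_{G'_i}(\vec\partial S_i),
\]
so the residual capacity across the cut drops by at least this amount, yielding the factor $0.75$. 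Note the correction is $\tau/2$, not $O(\epsilon\tau)$ as you were hoping; the constants $0.5$ in the definition of $\tau$ and $4$ in the threshold $(1-4\epsilon)$ are tuned exactly so that this arithmetic lands on $0.75$. Your handling of the previously saturated arcs (they carry no $h$-flow, hence remain saturated and contribute zero to $c_{G'_{i+1}}(\vec\partial S_{i+1})$) matches the paper as well.
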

\begin{proof}
Let $d$ be the demand routed by flow $h$. By \Cref{thm:s-t-cut-flow}, there is a feasible flow $r$ in $G_i$ routing the residual demand $\tau(\mathbbm 1_s-\mathbbm 1_t)-d$ with congestion~$\epsilon$. Then, the flow $h+r$ in $G'_i$ routes demand $\tau(\mathbbm 1_s-\mathbbm 1_t)$ (with arbitrary congestion). This flow must pass through the cut $\vec\partial S_i$ in $G'_i$, so $h+r$ sends a net flow of $\tau$ across $\vec\partial S_i$. Since the flow $r$ has congestion $\epsilon$ in $G_i$, removing it from $h+r$ affects the net flow across $\vec\partial S_i$ by at most $\epsilon c_{G_i}(\partial S_i)$, so the flow $h$ sends at least $\tau-\epsilon c_{G_i}(\partial S_i)$ across $\vec\partial S_i$. Each arc $(u,v)\in\vec\partial S_i$ in $G'_i$ satisfies $(u,v)\in U_i$, so $f_i(u,v)\le(1-4\epsilon)c_G(u,v)$ and $c_{G'_i}(u,v)\ge 4\epsilon c_G(u,v)=4\epsilon c_{G_i}(u,v)$. Summing over all arcs $(u,v)\in\vec\partial S_i$ gives $c_{G'_i}(\vec\partial S_i)\ge4\epsilon c_{G_i}(\partial S)$, so the flow $h$ sends at least $\tau-0.25c_{G'_i}(\vec\partial S_i)=0.25c_{G'_i}(\vec\partial S_i)$ flow across $\vec\partial S_i$.

Let $H$ be the residual graph of $G_i$ for the restricted flow $f_{i+1}|_{G_i}$. By definition of residual graph, the quantity $c_{G_i}(\partial S_i)-c_H(\vec\partial S_i)$ is exactly the net flow that $f_{i+1}|_{G_i}$ sends across $\vec\partial S_i$. Since $G'_i$ is the residual graph of $G_i$ for the restricted flow $f_i|_{G_i}$, the quantity $c_{G_i}(\partial S_i)-c_{G'_i}(\vec\partial S_i)$ is exactly the net flow that $f_i|_{G_i}$ sends across $\vec\partial S_i$.   Since $f_{i+1}|_{G_i}=f_i|_{G_i}+h$, the difference of quantities ${(c_{G_i}(\partial S_i)-c_H(\vec\partial S_i))-(c_{G_i}(\partial S_i)-c_{G'_i}(\vec\partial S_i))}$ is exactly the net flow that $h$ sends across $\vec\partial S_i$, which is at least $0.25c_{G'_i}(\vec\partial S_i)$. In other words, $c_H(\vec\partial S_i)\le c_{G'_i}(\vec\partial S_i)-0.25c_{G'_i}(\vec\partial S_i)=0.75c_{G'_i}(\vec\partial S_i)$. Any previously ``saturated'' arc $(u,v)\in\vec\partial S_i\setminus\vec U_i$ is still ``saturated'' in flow $f_{i+1}$ (i.e., $f_{i+1}(u,v)>(1-4\epsilon)c_G(u,v)$) since the arc is absent from $G'_i$ and hence carries no flow in $h$. Since $S_{i+1}=S_i$, we have $\partial S_i\setminus U_i\subseteq\partial S_{i+1}\setminus U_{i+1}$, which means that $G_i\supseteq G_{i+1}$. In particular, the arcs in $\vec\partial S_i$ present in $G'_{i+1}$ are also present in $H$, and they have the same capacity since both $G'_{i+1}$ and $H$ are residual graphs for a restriction of $f_{i+1}$. We conclude that $c_{G'_{i+1}}(\vec\partial S_i)\le c_H(\vec\partial S_i)$, and
\[ c_{G'_{i+1}}(\vec\partial S_{i+1})=c_{G'_{i+1}}(\vec\partial S_i)\le c_H(\vec\partial S_i)\le0.75c_{G'_i}(\vec\partial S_i) ,\]
as promised.
\end{proof}

\begin{lemma}\label{lem:progress-cut}
If \Cref{thm:s-t-cut-flow} returns a cut $(X_i,V\setminus X_i)$, then $c_{G'_{i+1}}(\vec\partial S_{i+1})\le0.75c_{G'_i}(\vec\partial S_i)$.
\end{lemma}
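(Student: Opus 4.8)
The plan is to combine the value guarantee on the returned cut with submodularity of the directed cut function in $G'_i$ and a comparison of the two residual graphs $G'_i,G'_{i+1}$ on the arcs of $\vec\partial S_{i+1}$. Since \Cref{thm:s-t-cut-flow} returned a cut, its first alternative holds: $c_{G'_i}(\vec\partial X_i)<\tau=0.5\,c_{G'_i}(\vec\partial S_i)$. Recall also that in this case $f_{i+1}=f_i$, and $S_{i+1}$ is whichever of $S_i\cup X_i$ and $S_i\cap X_i$ has the smaller value of $c_{G'_i}(\vec\partial\cdot)$.

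I would first bound $c_{G'_i}(\vec\partial S_{i+1})$. Submodularity applied to $S_i$ and $X_i$ in $G'_i$ gives
\[ c_{G'_i}(\vec\partial(S_i\cup X_i))+c_{G'_i}(\vec\partial(S_i\cap X_i))\le c_{G'_i}(\vec\partial S_i)+c_{G'_i}(\vec\partial X_i)<1.5\,c_{G'_i}(\vec\partial S_i). \]
Since $S_{i+1}$ minimizes $c_{G'_i}(\vec\partial\cdot)$ over the two sets on the left, $2\,c_{G'_i}(\vec\partial S_{i+1})$ is at most the left-hand side, hence $c_{G'_i}(\vec\partial S_{i+1})<0.75\,c_{G'_i}(\vec\partial S_i)$.

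It then remains to show $c_{G'_{i+1}}(\vec\partial S_{i+1})\le c_{G'_i}(\vec\partial S_{i+1})$. The claim is that every arc $(u,v)\in\vec\partial S_{i+1}$ present in $G'_{i+1}$ is also present in $G'_i$ with the same capacity; since residual capacities are nonnegative, summing over $\vec\partial S_{i+1}$ then yields the inequality. For such an arc, $u\in S_{i+1}$, $v\notin S_{i+1}$, and $\{u,v\}\in E(G_{i+1})$, which forces $\{u,v\}\in U_{i+1}$, i.e.\ $f_{i+1}(u,v)=f_i(u,v)\le(1-4\epsilon)c_G(u,v)$. If $\{u,v\}$ were absent from $G_i$, it would be an edge of $\partial S_i$ saturated in the $S_i$-direction; but because $S_{i+1}$ is $S_i\cup X_i$ or $S_i\cap X_i$, the orientation of $\{u,v\}$ across the cut cannot flip between $S_i$ and $S_{i+1}$ (if $S_{i+1}\supseteq S_i$ then $v\notin S_{i+1}$ gives $v\notin S_i$; if $S_{i+1}\subseteq S_i$ then $u\in S_{i+1}$ gives $u\in S_i$), so it would be saturated in the $u\to v$ direction for $f_i$ as well, contradicting the previous line. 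Hence $\{u,v\}\in E(G_i)$, and since $f_{i+1}=f_i$ both residual capacities equal $c_G(u,v)-f_i(u,v)+f_i(v,u)$. This completes the proof.

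The step I expect to be the main obstacle is the last one: unlike in \Cref{lem:progress-flow}, the current cut changes here, so the underlying undirected graph can change from $G_i$ to $G_{i+1}$ (a different set of saturated crossing edges is removed), and one must check that this only drops arcs from $\vec\partial S_{i+1}$ rather than increasing its residual capacity. The structural fact that makes this work is precisely why the algorithm restricts $S_{i+1}$ to be comparable to $S_i$ — one of $S_i\cup X_i$, $S_i\cap X_i$ — rather than taking $X_i$ itself.
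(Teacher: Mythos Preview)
Your proof is correct and follows essentially the same approach as the paper: the first half (submodularity plus the minimizing choice of $S_{i+1}$) is identical, and for the second half you prove the same arc-inclusion claim, just organized as a contradiction argument instead of the paper's direct case split on whether $(u,v)\in\vec\partial S_i$. Your observation that the ``orientation cannot flip'' because $S_{i+1}$ is comparable to $S_i$ is exactly the structural fact the paper exploits in its two cases.
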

\begin{proof}
By \Cref{thm:s-t-cut-flow}, the cut $(X_i,V\setminus X_i)$ satisfies $c_{G'_i}(\vec\partial X_i)\le\tau=0.5c_{G'_i}(\vec\partial S_i)$. By submodularity of the cut function $c_{G'_i}(\vec\partial S)$,
\[ c_{G'_i}(\vec\partial S_i)+c_{G'_i}(\vec\partial X_i)\ge c_{G'_i}(\vec\partial(S_i\cup X_i))+c_{G'_i}(\vec\partial(S_i\cap X_i)) ,\]
and by the choice of $S_{i+1}$,
\[ c_{G'_i}(\vec\partial S_{i+1})\le\frac12\big( c_{G'_i}(\vec\partial(S_i\cup X_i))+c_{G'_i}(\vec\partial(S_i\cap X_i))\big)\le\frac12\big(c_{G'_i}(\vec\partial S_i)+c_{G'_i}(\vec\partial X_i)\big)\le0.75c_{G'_i}(\vec\partial S_i).\]

We now claim that $c_{G'_{i+1}}(\vec\partial S_{i+1})\le c_{G'_i}(\vec\partial S_{i+1})$. Note that arcs present in both $G'_i$ and $G'_{i+1}$ must have the same capacity since $f_{i+1}=f_i$, so it suffices to show that the arcs in $\vec\partial S_{i+1}$ present in $G'_{i+1}$ are a subset of those present in $G'_i$. Any arc $(u,v)\in\vec\partial S_{i+1}$ present in $G'_{i+1}$ satisfies $(u,v)\in\vec U_{i+1}$, so $f_i(u,v)=f_{i+1}(u,v)\le(1-4\epsilon)c_G(u,v)$. If $(u,v)\in\vec\partial S_i$ as well, then $(u,v)\in\vec U_i$ and the arc belongs to $G'_i$. Otherwise, if $(u,v)\notin\vec\partial S_i$, then there are two cases. If $S_{i+1}=S_i\cap X_i$, then since $u\in S_{i+1}\subseteq S_i$, we must have $v\in S_i$ as well. So the edge $(u,v)$ is not in $\partial S_i$, which means the arc $(u,v)$ belongs to $G'_i$, establishing the claim. If $S_{i+1}=S_i\cup X_i$, then since $v\in V\setminus S_{i+1}\subseteq V\setminus S_i$, we must have $u\in V\setminus S_i$ as well. So the edge $(u,v)$ is not in $\partial S_i$, and the same argument follows.

Putting everything together, we conclude that $c_{G'_{i+1}}(\vec\partial S_{i+1})\le c_{G'_i}(\vec\partial S_{i+1})\le0.75c_{G'_i}(\vec\partial S_i)$.
\end{proof}

Finally, we prove the correctness of the algorithm, establishing \Cref{thm:main-convenient}.
\begin{lemma}
The output $(S_{L+1},V\setminus S_{L+1})$ is a $(1+O(\epsilon\log n))$-fair cut with high probability.
\end{lemma}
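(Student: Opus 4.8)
The plan is to use the two progress lemmas to drive $c_{G'_i}(\vec\partial S_i)$ below $1$, and then turn the maintained flow $f_{L+1}$ into a feasible $(s,t)$-flow witnessing fairness. First, we may assume $\epsilon \le c_0/\log n$ for a suitably small absolute constant $c_0$: otherwise run the algorithm with $\epsilon' = c_0/\log n$, whose output is in particular $(1+O(\epsilon\log n))$-fair and is computed in time $\tilde O(m/\epsilon') = \tilde O(m) \le \tilde O(m/\epsilon)$. With high probability all $L = \Theta(\log n)$ calls to \Cref{thm:s-t-cut-flow} succeed (union bound over $O(\log n)$ calls), so \Cref{lem:progress-flow} and \Cref{lem:progress-cut} apply at every step and give $c_{G'_{L+1}}(\vec\partial S_{L+1}) \le 0.75^{L}\, c_{G'_1}(\vec\partial S_1)$. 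Since $f_1$ is empty, iteration $1$ has no saturated arcs, so $G_1 = G$, $G'_1 = \overleftrightarrow G$, and $c_{G'_1}(\vec\partial S_1) = c_G(\partial S_1)$, which is polynomially bounded; taking the hidden constant in $L = \Theta(\log n)$ large enough and using $\epsilon \ge 1/\mathrm{poly}(n)$ then forces $c_{G'_{L+1}}(\vec\partial S_{L+1}) < \epsilon$.

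Next I would show $f_{L+1}(u,v) > (1-4\epsilon)\, c_G(u,v)$ for \emph{every} arc $(u,v) \in \vec\partial S_{L+1}$. If $(u,v) \notin \vec U_{L+1}$, this is the definition of $\vec U_{L+1}$. If $(u,v) \in \vec U_{L+1}$, then the edge $\{u,v\}$ is not deleted in forming $G_{L+1}$, so $(u,v)$ is present in $G'_{L+1}$ with $c_{G'_{L+1}}(u,v) \ge c_G(u,v) - f_{L+1}(u,v) \ge 0$; as one of the nonnegative terms of the sum $c_{G'_{L+1}}(\vec\partial S_{L+1}) < \epsilon$, it satisfies $c_G(u,v) - f_{L+1}(u,v) < \epsilon \le \epsilon\, c_G(u,v)$, using integrality ($c_G(u,v)\ge 1$), hence $f_{L+1}(u,v) > (1-\epsilon)\, c_G(u,v)$. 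This is the one place integral polynomially bounded capacities are used: the progress lemmas control only the \emph{total} residual capacity across the cut, and driving that total below $1 \le c_G(u,v)$ upgrades it to the \emph{per-arc} bound in the fair-cut definition.

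Finally, I would repair $f_{L+1}$. It is feasible in $G$ by standard residual-graph bookkeeping, but it routes the accumulated demand $d = \sum_i d_i$, where $d_i$ is the demand of the flow returned in iteration $i$ (and $d_i = 0$ on cut iterations). By \Cref{thm:s-t-cut-flow}, for each flow iteration the residual demand $\tau_i(\1_s-\1_t) - d_i$ routes in $G$ with congestion $\epsilon$; summing over the at most $L$ flow iterations, $d = T(\1_s-\1_t) - r$ with $T = \sum_i\tau_i$ and $r$ routable in $G$ with congestion $\le L\epsilon$. Fixing such a routing $g \ge 0$ (so $g(u,v) \le L\epsilon\, c_G(u,v)$ per arc) and setting $f^* = \frac{1}{1+L\epsilon}\,(f_{L+1}+g)$, the flow $f^*$ is nonnegative, has congestion $\le \frac{1+L\epsilon}{1+L\epsilon} = 1$, and routes $\frac{T}{1+L\epsilon}(\1_s-\1_t)$, so it is a feasible $(s,t)$-flow; moreover, for every $(u,v) \in \vec\partial S_{L+1}$,
\[
f^*(u,v) = \frac{f_{L+1}(u,v)+g(u,v)}{1+L\epsilon} \ge \frac{f_{L+1}(u,v)}{1+L\epsilon} > \frac{(1-4\epsilon)\,c_G(u,v)}{1+L\epsilon} \ge \frac{c_G(u,v)}{1+O(\epsilon\log n)},
\]
where the last step uses $\frac{1-4\epsilon}{1+L\epsilon} \ge 1 - (L+4)\epsilon \ge \frac{1}{1+2(L+4)\epsilon}$, valid since $(L+4)\epsilon \le \frac12$ by the choice of $c_0$. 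Thus $f^*$ certifies that $(S_{L+1},V\setminus S_{L+1})$ is a $(1+O(\epsilon\log n))$-fair $(s,t)$-cut.

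The crux is this last repair step, which is also where the $\epsilon \to O(\epsilon\log n)$ loss appears: the algorithm only accumulates an \emph{approximate} $(s,t)$-flow — each iteration leaves a residual demand routable with congestion $\epsilon$ — so undoing this across $\Theta(\log n)$ iterations costs $\Theta(\epsilon\log n)$ congestion and forces the rescaling by $1+\Theta(\epsilon\log n)$. Everything else is bookkeeping, once one observes that the geometric decay from the progress lemmas can overshoot the unit threshold under polynomial integer capacities.
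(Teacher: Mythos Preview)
Your proof is correct and follows essentially the same route as the paper: use the two progress lemmas to drive $c_{G'_{L+1}}(\vec\partial S_{L+1})$ below a threshold depending on the integral capacities, conclude that every arc of $\vec\partial S_{L+1}$ is $(1-O(\epsilon))$-saturated by $f_{L+1}$, and then add the $L$ congestion-$\epsilon$ residual routings from \Cref{thm:s-t-cut-flow} to turn $f_{L+1}$ into a genuine $(s,t)$-flow at a cost of $O(\epsilon\log n)$. The only cosmetic differences are that the paper argues $\vec U_{L+1}=\emptyset$ directly (threshold $4\epsilon$) whereas you bound each unsaturated arc individually (threshold $\epsilon$), and that you are more explicit than the paper about the final rescaling by $1/(1+L\epsilon)$ needed to restore feasibility.
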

\begin{proof}
By \Cref{lem:progress-flow,lem:progress-cut}, we have $c_{G'_{i+1}}(\vec\partial S_{i+1})\le0.75c_{G'_i}(\vec\partial S_i)$ for each iteration $i$. Since capacities are polynomially bounded, we start with $c_{G'_1}(\vec\partial S_1)\le n^{O(1)}$, so for large enough $L=\Theta(\log n)$ we have $c_{G'_{L+1}}(\vec\partial S_{L+1})<4\epsilon$ with high probability. Any arc $(u,v)\in\vec U_{L+1}$ belongs to $\vec\partial S_{L+1}$ and satisfies $c_{G'_{L+1}}(u,v)\ge 4\epsilon c_G(u,v)\ge4\epsilon$, so no such arcs exist. In other words, $\vec U_{L+1}=\emptyset$, and it follows that $f_{L+1}(u,v)\ge(1-4\epsilon)c_G(u,v)$ for all arcs $(u,v)\in\vec\partial S_{L+1}$. To establish fairness, it remains to augment $f_{L+1}$ to an $(s,t)$-flow.

By construction, $f_{L+1}=h_1+h_2+\cdots+h_L$, and there exist flows $r_1,\ldots,r_L$ in $G$ of congestion $\epsilon$ such that each $h_i+r_i$ is an $(s,t)$-flow. In particular, the flow $f'=f_{L+1}+r_1+\cdots+r_L$ is an $(s,t)$-flow. Since $r_1+\cdots+r_L$ has congestion $O(\epsilon\log n)$, we have $|f'(u,v)-f_{L+1}(u,v)|\le O(\epsilon\log n)\cdot c_G(u,v)$ for all arcs $(u,v)$. In particular, for each arc $(u,v)\in\vec\partial S_{L+1}$,
\begin{align*}
f'(u,v)&\ge f_{L+1}(u,v)-O(\epsilon\log n)\cdot c_G(u,v)\\&\ge(1-4\epsilon)c_G(u,v)-O(\epsilon\log n)\cdot c_G(u,v)\\&\ge\frac1{1+O(\epsilon\log n)}c_G(u,v) ,
\end{align*}
so the $(s,t)$-flow $f'$ certifies that $(S_{L+1},V\setminus S_{L+1})$ is a $(1+O(\epsilon\log n))$-fair cut.
\end{proof}

\section{Approximate Max-Flow on Residual Graphs}\label{sec:approximate-max-flow}
In this section, we show how to apply Sherman's approximate max-flow/min-cut algorithm on directed \emph{residual graphs} of an underlying undirected graph. The flow may not satisfy the input demand, but the leftover demand will be routable with low congestion on the \emph{undirected} graph. Our main goal is to prove \Cref{thm:s-t-cut-flow}, restated below.

\STCutFlow*

%In particular, the computed flow may not be feasible, but it exceeds the capacity of each arc $(u,v)$ by at most $\epsilon$ times the \emph{sum} of the capacities of $(u,v)$ and its reverse $(v,u)$. Note that zero-capacity arcs $(u,v)$ are also considered.
%\begin{theorem}
%Given a directed graph $D=(V,E)$, vertices $s,t\in V$, and parameter $\epsilon>0$, we can compute a directed $(s,t)$-cut $C\subseteq E$ in $D$ and a flow $f$ of value at least $(1-\epsilon)c_H(C)$ such that $f(u,v)\le c(u,v)+\epsilon\max\{c(u,v),c(v,u)\})$ for all vertices $u,v\in V$.
%\end{theorem}

We first introduce some preliminaries from Sherman~\cite{Sherman2017area}. For a matrix $A\in\R^{n\times m}$, define the matrix norm $\|A\|_{\infty \to \infty}=\displaystyle\max_{\|v\|_\infty=1}\|Av\|_{\infty}$, and define $\nnz(A)$ as the number of nonzero entries in $A$. Define $\Delta_k^m\subseteq\R^{m\times k}$ as the set of matrices $X\in\R^{m\times k}$ with $X\ge\mathbf0$ and $\sum_jX_{ij}=1$ for all $i\in[m]$. We now present a key subroutine from~\cite{Sherman2017area}:
\begin{theorem}[Corollary~1.8 of~\cite{Sherman2017area}]\label{thm:Sherman-1.8}
There is an algorithm that, given $B \in \R^{n\times k}$, and $A \in \R^{n\times m}$ with $\|A\|_{\infty \to \infty} \le 1$, takes $\Tilde{O}(k \nnz(A) \epsilon^{-1})$ time and outputs either, 
\begin{enumerate}[(1)]
    \item $X \in \Delta_k^m$ such that $AX \le B + \epsilon R$ where $R \in \Delta_k^n$.
    \item $Y \in \R^{k\times n}, Y \ge \mathbf0$ such that $\tr(Y(AX-B)) > 0$ for all $X \in \Delta_k^m$.
\end{enumerate}
\end{theorem}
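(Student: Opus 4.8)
This is Corollary~1.8 of Sherman~\cite{Sherman2017area}, and the plan is to re-derive it along his lines: recast the dichotomy as an approximate saddle-point computation and solve it with an accelerated first-order iteration. \textbf{Reformulation.} First observe that for a fixed $X$, an $R\in\Delta_k^n$ with $AX\le B+\epsilon R$ exists if and only if every row satisfies $\|(AX-B)_i^+\|_1\le\epsilon$, where $z^+$ denotes the entrywise positive part: in the forward direction $(AX-B)_{ij}^+\le\epsilon R_{ij}$ and one sums over $j$; conversely, set $(R_i)_j\ge(AX-B)_{ij}^+/\epsilon$ and distribute the remaining mass arbitrarily so $R_i$ becomes a distribution (possible exactly because $\|(AX-B)_i^+\|_1\le\epsilon$). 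Thus outcome~(1) is equivalent to $f(X)\le\epsilon$ for the computed $X\in\Delta_k^m$, where $f(X):=\max_{i\in[n]}\|(AX-B)_i^+\|_1$, and the task reduces to approximately minimizing $f$ over the product of simplices $\Delta_k^m$, deciding whether the minimum is below $\epsilon$, and in the other case producing a dual witness.

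\textbf{Dualization.} Write $f(X)=\max_{Y\in\mathcal Y}\langle Y,AX-B\rangle$, where $\mathcal Y:=\{\,Y\ge\mathbf 0:\exists\,w\ge\mathbf 0,\ \|w\|_1\le1,\ \|Y_i\|_\infty\le w_i\ \forall i\,\}$ is the nonnegative part of the polar body of the ``max over rows of $\ell_1$ over columns'' gauge; both $\Delta_k^m$ and $\mathcal Y$ are convex, compact, and product-structured across rows. By Sion's minimax theorem the bilinear game $\min_{X\in\Delta_k^m}\max_{Y\in\mathcal Y}\langle Y,AX-B\rangle$ has a value, so it suffices to compute an $\epsilon/3$-approximate saddle point $(\bar X,\bar Y)$ (rescaling $\epsilon$ by a constant only changes the runtime by a constant), i.e.\ $\langle\bar Y,AX-B\rangle\ge\langle Y,A\bar X-B\rangle-\epsilon/3$ for all $X\in\Delta_k^m$ and $Y\in\mathcal Y$. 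If $f(\bar X)\le\epsilon$, then $\bar X$ together with the induced $R$ gives outcome~(1); otherwise $f(\bar X)>\epsilon$, and taking the supremum over $Y\in\mathcal Y$ in the saddle inequality yields $\langle\bar Y,AX-B\rangle\ge f(\bar X)-\epsilon/3>0$ for \emph{every} $X\in\Delta_k^m$. Since $\langle\bar Y,AX-B\rangle=\tr(\bar Y^{\mathsf T}(AX-B))$ and $\bar Y^{\mathsf T}\ge\mathbf 0$, the matrix $Y:=\bar Y^{\mathsf T}\in\R^{k\times n}$ is exactly the certificate required in outcome~(2).

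\textbf{The solver, and the main obstacle.} What remains is to produce this approximate saddle point within the stated time, which is the technical core and follows Sherman's ``area convexity'' method. Equip $\Delta_k^m$ with the entropy mirror map and $\mathcal Y$ with a matching entropy-type regularizer, and run a mirror-prox / extragradient iteration on the product space. Each step evaluates the bilinear gradients --- a constant number of products of $A$ and $A^{\mathsf T}$ against an $n\times k$ (resp.\ $m\times k$) matrix, costing $\tilde O(k\,\nnz(A))$ --- followed by $O(1)$-per-coordinate proximal (softmax-type) updates. The key lemma is that the joint regularizer is \emph{area-convex} with respect to the form $\langle Y,AX\rangle$, i.e.\ a certain $3\times3$ block matrix built from its Hessians is positive semidefinite; this is what upgrades the convergence rate from the $\tilde O(\epsilon^{-2})$ of plain mirror descent / Nesterov smoothing to $T=\tilde O(\epsilon^{-1})$ iterations, for a total of $\tilde O(k\,\nnz(A)\,\epsilon^{-1})$ time. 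The hypothesis $\|A\|_{\infty\to\infty}\le1$ enters precisely here: it bounds $\langle Y,AX\rangle$ over $\Delta_k^m\times\mathcal Y$ and fixes the scaling of the regularizers so that $T=\tilde O(\epsilon^{-1})$ steps drive the duality gap below $\epsilon/3$. I expect the delicate parts to be exactly this area-convexity verification --- the positive semidefiniteness of that small block matrix --- together with the bookkeeping that an $\epsilon/3$-gap saddle point yields the \emph{exact} normalized objects demanded ($X\in\Delta_k^m$, $R\in\Delta_k^n$, $Y\ge\mathbf 0$) rather than approximate relaxations of them; the reformulation and the dual read-off are routine once the solver is in hand.
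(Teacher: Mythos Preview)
The paper does not prove this statement at all: it is imported verbatim as Corollary~1.8 of Sherman~\cite{Sherman2017area} and used purely as a black box, so there is no ``paper's own proof'' to compare against. Your sketch is a faithful outline of Sherman's original argument (reduction to a bilinear saddle-point problem, then the area-convexity accelerated mirror-prox solver), and your reformulation of outcome~(1) as $\max_i\|(AX-B)_i^+\|_1\le\epsilon$ together with the dual read-off is correct; but none of this appears in the present paper, which simply cites the result.
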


This result can be used to solve approximate multi-commodity flow, as indicated by Lemma~4.2 of~\cite{Sherman2017area}. Sherman only provided a sketch proof in the original paper. For specificity and completeness, we state and prove a one-commodity flow version of Sherman's Lemma~4.2 for some residual network $G'$ of $G$, where the matrix $R$ is a congestion approximator for $G$. %In particular, we remark that the graph corresponding to congestion approximator $R$ and the graph corresponding to the discrete divergence operator $D$ need not be the same one. 

For a given directed graph $G=(V,\vec E)$, we represent a flow $f$ by its vector of congestions on edges, which is a nonnegative vector in $\R^{\vec E}$ where for each arc $(u,v)\in\vec E$, the flow $f$ sends $f_{(u,v)}c_G(u,v)$ flow. Note that $\|f\|_\infty$ is exactly the congestion of the flow. We define $C_G\in\R^{\vec E\times\vec E}$ as the diagonal matrix whose entry $(u,v)$ is the capacity of arc $(u,v)\in\vec E$. We define $D_G\in\R^{\vec E\times V}$ as the matrix whose row $(u,v)\in\vec E$ has vector $(\mathbbm 1_u-\mathbbm 1_v)^\top$, also called the discrete divergence operator which maps any vector $C_Gf$ to the demand satisfied by the flow $f$. If $G=(V,E)$ is an undirected graph, we treat it as a directed graph with bidirected edge set $\overleftrightarrow E$. 

For any \emph{undirected} graph $G$, an $\alpha$-congestion approximator is a matrix $R\in\R^{[r]\times V}$ (where dimension $r$ is unspecified) such that for any demand $d$ whose optimal flow has congestion $OPT(d)$, it holds that $\|Rd\|_\infty\le OPT(d)\le\alpha\|Rd\|_\infty$.

%\begin{theorem}
%    Given an undirected graph $G$, any of its residual graph $G'$, and a demand vector $b$,  there exists an algorithm that runs in time $\Tilde{O}(m \cdot \frac{\alpha}{\epsilon})$, and either outputs some s-t cut in $G'$ that certifies infeasibility of $b$, or outputs a flow $f$ that is feasible on $G'$, whose residual demand can be routed in $G$ with congestion $\le O(\epsilon)$. 
%\end{theorem}

%To prove \Cref{thm:s-t-cut-flow}, we shall begin with a definition of subgraph:

%\begin{definition}
    Given a directed graph $G = (V,E)$, we say $G' = (V, E')$ is a subgraph of $G$ (denoted as $G' \le G$) iff for all $(u,v) \in V\times V$, it holds that $c_{G'}(u,v) \le c_G(u,v)$. 
%\end{definition}
In particular, if $G'$ is some residual network for undirected $G$, then it holds that $G' \le 2G$.

\begin{lemma}\label{lem:operator-norm}
    Given an undirected graph $G$, a subgraph $G' \le\overleftrightarrow G$, and an $\alpha$-congestion approximator $R$ of $G$, it holds that $\|R D_{G'} C_{G'}\|_{\infty \to \infty} \le 2$.
%    \begin{enumerate}[(a)]
%        \item $\|R D_{G'} C_{G'}\|_{\infty \to \infty} \le 2$
%        \item $nnz(R D_{G'} C_{G'}) \le O(m \log n)$.
%    \end{enumerate}
\end{lemma}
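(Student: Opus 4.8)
The plan is to unfold the definition of the operator norm and reduce the claim to a statement about minimum congestion in the \emph{undirected} graph $G$, which is exactly what the congestion approximator $R$ controls. Since $\|R D_{G'} C_{G'}\|_{\infty \to \infty} = \max_{\|f\|_\infty \le 1}\|R D_{G'} C_{G'} f\|_\infty$, it suffices to fix an arbitrary congestion vector $f \in \R^{\vec E(G')}$ with $\|f\|_\infty \le 1$ and show $\|R D_{G'} C_{G'} f\|_\infty \le 2$. First I would interpret $f$ as an honest flow in $G'$, sending $f_{(u,v)}\,c_{G'}(u,v)$ units along each arc $(u,v)$; because $\|f\|_\infty \le 1$ this flow is feasible in $G'$. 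Let $d := D_{G'} C_{G'} f$ be the demand it routes, so $R D_{G'} C_{G'} f = Rd$. By the defining property of an $\alpha$-congestion approximator of $G$, we have $\|Rd\|_\infty \le \mathrm{OPT}_G(d)$, where $\mathrm{OPT}_G(d)$ is the minimum congestion with which $d$ can be routed in the undirected graph $G$. Thus the whole statement reduces to proving $\mathrm{OPT}_G(d) \le 2$.

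To bound $\mathrm{OPT}_G(d)$ I would exhibit an explicit undirected flow in $G$ that routes $d$. Since $G' \le \overleftrightarrow G$, every arc $(u,v)\in\vec E(G')\subseteq\overleftrightarrow E$ has $c_{G'}(u,v)\le c_{\overleftrightarrow G}(u,v)=c_G(u,v)$, the latter being the common capacity of the undirected edge $\{u,v\}$ and both of its arcs; hence the flow above carries at most $c_G(u,v)$ units along $(u,v)$ and at most $c_G(u,v)$ along $(v,u)$. Define an undirected flow $g$ on $G$ that sends along each edge $\{u,v\}$ the net directed amount (flow on $(u,v)$ minus flow on $(v,u)$). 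This preserves the net flow into every vertex, so $g$ routes the same demand $d$, and the magnitude of $g$ on $\{u,v\}$ is at most the larger of the two arc flows, hence at most $c_G(u,v)$, giving congestion at most $1 \le 2$. (Even without cancelling antiparallel flow the total on $\{u,v\}$ is at most $2c_G(u,v)$; this same slack of $2$ is what absorbs the blow-up $G'\le 2\overleftrightarrow G$ that occurs when $G'$ is a residual graph, as noted before the lemma.) Therefore $\mathrm{OPT}_G(d)\le 2$, and combining with the first paragraph, $\|R D_{G'} C_{G'} f\|_\infty = \|Rd\|_\infty \le \mathrm{OPT}_G(d) \le 2$; taking the maximum over all such $f$ yields $\|R D_{G'} C_{G'}\|_{\infty\to\infty}\le 2$.

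The proof is essentially bookkeeping, and I do not expect a genuine obstacle; the only point that needs care is the translation between a feasible flow in the directed graph $G'$ (bidirected or residual) and a feasible undirected flow in $G$, together with the precise convention for the congestion of an undirected flow. This is exactly where the constant $2$ — rather than $1$ — is the robust bound, and it is the reason the lemma is stated with a $2$ so that it applies uniformly both to subgraphs of $\overleftrightarrow G$ and, via $G'\le 2\overleftrightarrow G$, to residual graphs. Once the demand $d$ is isolated, the congestion approximator inequality $\|Rd\|_\infty\le\mathrm{OPT}_G(d)$ does the rest. (An equivalent, slightly slicker route writes $D_{G'}C_{G'} = D_{\overleftrightarrow G}C_{\overleftrightarrow G}\,J$ for a nonnegative diagonal-like matrix $J$ with $\|J\|_{\infty\to\infty}\le 1$ recording the capacity ratios $c_{G'}(e)/c_G(e)\le 1$, reducing to the bound $\|R D_{\overleftrightarrow G}C_{\overleftrightarrow G}\|_{\infty\to\infty}\le 2$, which is the special case $G'=\overleftrightarrow G$ handled above.)
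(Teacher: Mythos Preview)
Your approach is the same as the paper's: reduce to bounding $\mathrm{OPT}_G(d)$ for $d = D_{G'}C_{G'}f$ via the congestion-approximator inequality $\|Rd\|_\infty \le \mathrm{OPT}_G(d)$, then exhibit a flow in $G$ routing $d$ with congestion at most $2$. There is one slip, however. The operator norm $\|\cdot\|_{\infty\to\infty}$ ranges over \emph{all} vectors $f$ with $\|f\|_\infty \le 1$, not just nonnegative ones, so you cannot treat $f$ as a feasible flow in $G'$. With signed entries, your claim that ``the magnitude of $g$ on $\{u,v\}$ is at most the larger of the two arc flows'' fails: if $f_{(u,v)}=1$ and $f_{(v,u)}=-1$ then the net is $c_{G'}(u,v)+c_{G'}(v,u)$, which can be as large as $2c_G(u,v)$, not $c_G(u,v)$. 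The fix is exactly what you already wrote in the parenthetical --- bound the net by the sum rather than the max, giving congestion $\le 2$ --- and this is why the lemma is stated with a $2$ even for $G' \le \overleftrightarrow G$, not merely to absorb the residual-graph factor you mention. The paper handles the signed $f$ issue by shifting each antiparallel pair $(f_{e^+},f_{e^-})$ by a common constant (which leaves the routed demand unchanged) until the smaller coordinate is zero; the resulting nonnegative vector has $\ell_\infty$ norm at most $2$, hence is a genuine flow in $G$ of congestion at most $2$.
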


\begin{proof}
Fix any vector $f \in \R^{\overleftrightarrow E}$ with $\|f\|_\infty = 1$, and let $d := D_{\overleftrightarrow G} C_{\overleftrightarrow G} f$. Notice that each pair of anti-parallel arcs $e^+, e^-$ of $\overleftrightarrow G$ has the same capacity, and adding a constant to $f_{e^+}, f_{e^-}$ will not change the demand routed by the flow; we perform such operations on each pair of anti-parallel arcs to obtain some $f'$ such that $\min(f_{e^+}', f_{e^-}') = 0$ for all pairs of anti-parallel arcs. Now it is evident that $\|f'\|_\infty \le 2$, and that $f'$ corresponds to a flow $f_G$ of $G$ with demand $d$ and congestion $\le 2$, so 
    $$\|R D_{G} C_{G} f\|_\infty = \|Rd\|_\infty \le OPT(d) \le \|f_G\|_\infty = 2$$
    and it follows that 
    $$\|RD_G C_G f\|_{\infty} = \|Rd\|_\infty \le 2$$
    so
    $$\|R D_G C_G\|_{\infty \to \infty} \le 2.$$
    Let $F := \{f \in \R^{\overleftrightarrow E}\ :\ \|f\|_\infty \le 1\}$ be the set of all directed flows of congestion $1$. Since $c_{G'}(u,v) \le c_G(u,v)$ for all arcs $(u,v)\in\overleftrightarrow E$, we have the set inclusion
    $$\{ D_{G'} C_{G'}f : f\in F\} \subseteq \{ D_G C_Gf : f \in F\}$$
    so
    $$\max_{f\in F}\|R D_{G'} C_{G'}f\|_\infty \le \max_{f\in F}\|R D_G C_G f\|_\infty \le 2$$
    and it follows that $\|R D_{G'} C_{G'}\|_{\infty \to \infty} \le 2$.
    \end{proof}

Note that Sherman's Stochastic Matrix Algorithm, as per Corollary 1.8 of \cite{Sherman2017area}, may output a dual, and when leveraging Sherman's algorithm to solve approximate $(s,t)$-max-flow, we may need to explicitly compute a corresponding $(s,t)$-cut that is integral. The following lemma constructs such a cut in a directed graph, given an infeasibility criterion from Sherman's algorithm: 

\begin{lemma}\label{lem:threshold-cut}
    Given a directed graph $G$ and a ``potential'' vector $\phi \in \R^n$ on vertices, we define a corresponding flow $f_\phi$ via the following:
    $$(f_\phi)_{uv} = \begin{cases}
        1, \text{ if } \phi_u > \phi_v\\
        0, \text{ otherwise }
    \end{cases}$$
    We also suppose that 
    $$\phi^\top(d-B_GC_Gf_\phi) > 0$$ 
    for some demand $d$. Then if we sort the vertices by decreasing potential $\phi_v$, there must be some prefix $S \subset V$, such that $\sum_{v \in S} d_v > c_G(\vec\partial S)$, certifying the infeasibility of such a demand. Furthermore, the cut $(S, V\setminus S)$ can be computed in $O(m+n\log n)$ time. 

Moreover, if $d=\tau(\mathbbm 1_s-\mathbbm 1_v)$ for two vertices $s,t\in V$ and parameter $\tau>0$, then the cut $(S,V\setminus S)$ is an $(s,t)$-cut with $c_G(\vec\partial S)<\tau$.
\end{lemma}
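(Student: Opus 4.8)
\textbf{Proof plan for \Cref{lem:threshold-cut}.}

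The plan is to exploit the structure of $f_\phi$: it sends one unit of congestion (i.e., $c_G(u,v)$ flow) on exactly those arcs pointing ``downhill'' in potential. First I would rewrite the hypothesis $\phi^\top(d - B_G C_G f_\phi) > 0$ using a layer-cake / summation-by-parts identity. Sort the vertices so that $\phi_{v_1} \ge \phi_{v_2} \ge \cdots \ge \phi_{v_n}$, and for each threshold index $k$ let $S_k = \{v_1,\dots,v_k\}$ be the corresponding prefix. The key algebraic observation is that for any vector $x \in \R^n$ (applied with $x = d$ and with $x = B_G C_G f_\phi$), one can write $\phi^\top x$ as a nonnegative combination of the prefix sums $\sum_{v \in S_k} x_v$, with coefficients $\phi_{v_k} - \phi_{v_{k+1}} \ge 0$ (plus a boundary term $\phi_{v_n}\sum_v x_v$, which vanishes for $x=d$ since $\sum_v d_v = 0$ and for $x = B_G C_G f_\phi$ since divergences always sum to zero). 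Concretely, $\phi^\top x = \sum_{k=1}^{n-1} (\phi_{v_k} - \phi_{v_{k+1}}) \big(\sum_{v \in S_k} x_v\big) + \phi_{v_n}\sum_v x_v$.

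Next I would identify the two prefix-sum quantities. For $x = d$ we get $\sum_{v \in S_k} d_v = \Delta(S_k)$. For $x = B_G C_G f_\phi$, the prefix sum $\sum_{v\in S_k}(B_G C_G f_\phi)_v$ is the net flow of $f_\phi$ out of $S_k$; since $f_\phi$ saturates (to congestion $1$) precisely the downhill arcs, every arc in $\vec\partial S_k$ (tail in $S_k$, head outside, hence strictly lower potential once we break ties consistently with the sorted order) carries its full capacity out, while no arc carries flow into $S_k$ across the cut — so this net flow is at least $c_G(\vec\partial S_k)$. A minor subtlety to dispatch: ties in $\phi$. I would break ties by the sorted order, so that within a block of equal-potential vertices the coefficient $\phi_{v_k}-\phi_{v_{k+1}}$ is zero and those prefixes contribute nothing, which means we only ever need the claim for prefixes $S_k$ that respect potential strictly; for such prefixes, arcs from $S_k$ to its complement indeed go strictly downhill, so $f_\phi$ sends full capacity out and nothing in. Thus $\sum_{v\in S_k}(B_G C_G f_\phi)_v \ge c_G(\vec\partial S_k)$ for every relevant $k$, and combining, $0 < \phi^\top(d - B_G C_G f_\phi) \le \sum_{k=1}^{n-1}(\phi_{v_k}-\phi_{v_{k+1}})\big(\Delta(S_k) - c_G(\vec\partial S_k)\big)$.

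Now the conclusion is an averaging argument: since the coefficients $\phi_{v_k}-\phi_{v_{k+1}}$ are nonnegative and the whole sum is strictly positive, at least one term must be strictly positive, so there exists $k$ with $\Delta(S_k) > c_G(\vec\partial S_k)$; take $S = S_k$. For the ``furthermore,'' computing all prefix cut values $c_G(\vec\partial S_k)$ as $k$ increases can be done incrementally: sort vertices by $\phi$ in $O(n\log n)$ time, then sweep, updating the cut value when each new vertex is absorbed into the prefix by adding its out-capacity to vertices not yet in the prefix and subtracting the capacity of arcs now internalized — total $O(m+n\log n)$. Finally, for the ``moreover'' with $d = \tau(\1_s - \1_t)$: the prefix $S$ has $\Delta(S) = \tau[s\in S] - \tau[t\in S]$, and $\Delta(S) > c_G(\vec\partial S) \ge 0$ forces $\Delta(S) > 0$, which is only possible if $s \in S$ and $t \notin S$, i.e.\ $S$ is an $(s,t)$-cut; moreover then $\Delta(S) = \tau$, so $c_G(\vec\partial S) < \tau$.

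The main obstacle I anticipate is handling ties in the potentials cleanly — making sure the "downhill arc saturated, uphill arc empty" dichotomy lines up exactly with the chosen sorted order so that every prefix $S_k$ with a nonzero coefficient is genuinely separated in potential, and confirming that the boundary term $\phi_{v_n}\sum_v x_v$ vanishes in both applications. The summation-by-parts bookkeeping and the incremental cut-value computation are routine once the tie-breaking convention is fixed.
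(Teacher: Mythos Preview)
Your proposal is correct and follows essentially the same approach as the paper: the paper writes the identity as a layer-cake integral $\phi^\top d=\int\Delta(V_{>x})\,dx$ and $\phi^\top(B_GC_Gf_\phi)=\int c_G(\vec\partial V_{>x})\,dx$, which, since the integrands are piecewise constant in $x$, is exactly your discrete Abel summation over prefixes with weights $\phi_{v_k}-\phi_{v_{k+1}}$. Your tie-handling, the averaging step, the $O(m+n\log n)$ sweep, and the ``moreover'' argument all match the paper's proof.
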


\begin{proof}
We begin with some notation. Let
$V_{>x} := \{v \in V(G)\ :\ \phi_v > x\}$
denote the set of vertices of $G$ whose potential is strict greater than $x$,
and let $\Delta(S) := \sum_{v \in S} d_v$
denote the sum of demands in the set $S$ of vertices.

    Let $M$ be some positive real number such that $|\phi_v| < M$ for all $v \in V(G)$. We seek to prove
    $$\int_{-M}^M \Delta(V_{>x}) dx = \phi^\top d > \phi^\top(B_GC_Gf_\phi) = \int_{-M}^M c_G(\vec\partial V_{>x}) dx$$
    because then there must be some $-M \le x \le M$ s.t. $\Delta(V_{>x}) > c_G(\vec\partial V_{>x})$, and setting $S=V_{>x}$ achieves the desired $\Delta(S)> c_G(\vec\partial S)$.

\begin{enumerate}
 \item
    For the first equality, it holds that 
    \begin{align*}
        \int_{-M}^M \Delta(V_{>x}) dx = \int_{-M}^M \sum_{v\in V} d_v\cdot \mathbbm 1[\phi_v > x] dx =  \sum_{v\in V} \int_{-M}^M d_v\cdot \mathbbm 1[\phi_v > x] dx = \sum_{v\in V} d_v (\phi_v + M)
    \end{align*}
    and since $d$ is a demand, it holds that $\sum_v d_v = 0$, so the above equals
    $$ \sum_{v\in V} d_v \phi_v = \phi^\top d$$
 \item
    For the second equality, we start from the definition of $f_\phi$ and have 
    \begin{align*}
        \phi^\top(B_GC_Gf_\phi) &= \sum_v \phi_v \cdot (-\sum_{(u,v)\in E} c_G(u,v) \mathbbm 1[\phi_u > \phi_v] + \sum_{(v,w) \in E} c_G(v,w) \mathbbm 1[\phi_v > \phi_w] ) \\
        &= \sum_{(u,v) \in E} c_G(u,v) \mathbbm 1(\phi_u > \phi_v) (\phi_u - \phi_v)\\
        &= \sum_{(u,v) \in E} c_G(u,v) \max(0, \phi_u - \phi_v)\\
        &= \int_{-M}^M \sum_{(u,v) \in E} c_G(u,v) \mathbbm 1((u,v) \in \vec\partial (V_{>x}))dx\\
        &= \int_{-M}^M c_G(\vec\partial V_{>x}) dx
    \end{align*}
 \item 
    The inequality follows from the assumption $\phi^\top(d-B_GC_Gf_\phi) > 0$.
\end{enumerate}
    Furthermore, if $d=\tau(\mathbbm 1_s-\mathbbm 1_t)$, then notice that
$$\Delta(S)> c_G(\vec\partial S)\ge0$$
    and the only way for $\Delta(S)>0$ is if $s\in S$ and $t\notin S$, in which case $\Delta(S)=\tau$, so $(S,V\setminus S)$ is an $(s,t)$-cut with $c_G(\vec\partial S)<\tau$.

    To identify the set $S=V_{>x}$, it suffices to iterate through the cuts corresponding to all $O(n)$ prefixes $S=V_{>x}$, keeping track of $\Delta(S)$ and $c_G(\vec\partial S)$. We may start with $|S|=1$, which includes the vertex $v$ with the highest potential $\phi_v$, identify edges in the cut, and compute the value of $c_G(\vec\partial S)$; then we keep adding vertices to $S$, one at a time, in the order of decreasing potentials, and for each added vertex, we re-compute the set of edges in the new cut, then compute the associated capacity. Each update after adding some vertex $v$ requires iterating through $\deg(v)$ many edges, and since $\sum_v \deg(v) = 2m$, the overall time complexity to find the minimum threshold cut is $O(m)$ (after the initial sorting by $\phi_v$ in $O(n\log n)$).
\end{proof}

Using \Cref{lem:operator-norm,lem:threshold-cut}, we finally prove \Cref{thm:s-t-cut-flow}:

\begin{proof}
    To convert a single commodity flow into a right stochastic matrix problem, we consider the ``empty demand'' as another type of commodity. Specifically, we fix $k=2$ and $X = [f, f_\emptyset]$, where $f$ is the vector of a flow of congestion $1$, with its entries indicating congestion on edges, and $f_\emptyset$ is the vector indicating remaining congestion, defined as $f_\emptyset := \mathbbm 1 - f$; we also define $A := D_{G'}C_{G'}$, and $B := [d, d_\emptyset]$, where $d_\emptyset := D_G C_G \mathbbm{1} - d$ is the empty demand vector. Then $AX = B$ encodes a solution for the single commodity problem. 

Next, we cite Theorem 4.4 from Sherman \cite{Sherman2017area} to compute an $\alpha$-congestion approximator $R$ of $G$ with $\alpha=(\log n)^{O(1)}$ in $\tilde O(m)$ time with high probability, where $R$ is specifically column sparse, i.e.\ each column contains $O(\log n)$ many nonzero entries. Observe that $(RD_G' C_G')_{r,e}\ne0$ iff the directed edge $e$ is in the cut represented by row $r$ of $R$, and since $R$ is column sparse, there can only be $O(\log n)$ many cuts in $R$ that contain $e$; overall we obtain 
    $$nnz(RD_G' C_G') \le O(m \log n)$$

    Since $G'$ is a residual graph of $G$, we have $G'\le2G$, and applying \Cref{lem:operator-norm} on subgraph $G'/2$ (and then scaling up by factor $2$) gives $\|R D_{G'} C_{G'}\|_{\infty \to \infty} \le 4$. Now define $R' := R/4$ and
$A_2 := \begin{bmatrix}R'A \\ -R'A\end{bmatrix}$ and $B_2:= \begin{bmatrix}B \\ -B\end{bmatrix}$, so that 
$$ \|A_2\|_{\infty \to \infty}=\|R'A\|_{\infty\to\infty}=\|RA/4\|_{\infty\to\infty}=\|RD_{G'}C_{G'}/4\|_{\infty \to \infty} \le 1 $$
 and we invoke \Cref{thm:Sherman-1.8} to obtain, in time $\tilde{O}(k\,\nnz(A_2)\epsilon^{-1})\le\tilde{O}(k\,\nnz(RD_{G'}C_{G'})\epsilon^{-1})\le\tilde{O}(m\epsilon^{-1})$, either
\begin{enumerate}
\item some $X$ and some $S_1, S_2 \in \Delta_2^n$ such that
    $$\begin{cases}
    R' A X - R' B \le \frac{\epsilon}{\alpha} S_1 \\
    R' (-A) X - R' (-B) \le \frac{\epsilon}{\alpha} S_2
    \end{cases}$$ 
\item some $Y = [y_1, y_2] > 0$ such that
$$\text{tr}(Y(A_2X-B_2)) \ge 0$$
for all feasible flows $X = [f, f_\emptyset]$
\end{enumerate}
    In the first case, combining the two inequalities gives
    $$-\frac{\epsilon}{\alpha} S_2 \le R'(AX-B) \le \frac{\epsilon}{\alpha} S_1$$
    thus each row $r_i\in\R^2$ of $R'(AX-B)$ satisfies $\|r_i\|_{\infty} \le \frac{\epsilon}{\alpha}$, so that $\|r_i\|_1 \le 2\frac{\epsilon}{\alpha}$, and 
    \begin{align*}
        \|R'(Af-d)\|_{\infty\to\infty} &= \max_{\|v\|_\infty = 1} \|R'(Af-d) v\|_\infty\\
        &= \max_{\|v\|_\infty = 1} \max_i |(R'(Af-d) v)_i|\\
        &= \max_i \max_{\|v\|_\infty = 1} |(R'(Af-d) v)_i|\\
        &= \max_i \|(R'(Af-d))_i\|_1\\
        &\le \max_i \|(R'(AX-B))_i\|_1\\
        &\le \frac{2\epsilon}{\alpha}
    \end{align*}
    and it follows from definition of $R$ that if the algorithm returns some flow $X = [f, f_\emptyset]$, then the residual demand $Af-d$ can be routed with congestion $\le O(\epsilon)$ with respect to the undirected graph $G$. We may retroactively set $\epsilon$ a constant factor smaller so that the congestion is $\le\epsilon$.

    In the second case, it equivalently holds for all feasible flows $X = [f, f_\emptyset]$ that     
    $$y_1^\top(\begin{bmatrix}A \\ -A\end{bmatrix} f- \begin{bmatrix}d \\ -d\end{bmatrix}) + y_2^\top(\begin{bmatrix}A \\ -A\end{bmatrix}f_\emptyset - \begin{bmatrix}d_\emptyset \\ -d_\emptyset\end{bmatrix}) > 0$$ 
    By construction of duplicated rows, we have $y_i = \begin{bmatrix}w_i \\ z_i\end{bmatrix}$, and either one of 
    \begin{enumerate}[(i)]
        \item $w_1^\top (Af-d) + w_2^\top (Af_\emptyset-d_\emptyset) > 0$
        \item $z_1^\top (Af-d) + z_2^\top (Af_\emptyset-d_\emptyset) < 0$
    \end{enumerate}
    is true. We further notice that 
    \begin{align*}
        A(f+f_\emptyset) = d+d_\emptyset \iff
        Af-d = -(Af_\emptyset - d_\emptyset)
    \end{align*}
    so we substitute and either one of 
    \begin{enumerate}[(i)]
        \item $(w_2-w_1)^\top(d-Af) > 0$
        \item $(z_1-z_2)^\top(d-Af) > 0$
    \end{enumerate}
    is true; either of which, according to \Cref{lem:threshold-cut}, implies the existence of some $(s,t)$-cut $(S,V\setminus S)$ with $c_G(\vec\partial S)<\tau$ that can be computed in $O(m+n\log n)$ time.
\end{proof}

\bibliographystyle{alpha}
\bibliography{main}

\end{document}